
\documentclass{article}

\usepackage{microtype}
\usepackage{graphicx}
\usepackage{subcaption}
\usepackage{booktabs} 

\usepackage{hyperref}


\usepackage[accepted]{icml2026}



\usepackage{amsmath}
\usepackage{amssymb}
\usepackage{mathtools}
\usepackage{amsthm}
\usepackage{subfloat}
\usepackage{longtable}
\usepackage{booktabs} 
\usepackage{multirow}

\usepackage{adjustbox}

\usepackage[capitalize,noabbrev]{cleveref}

\theoremstyle{plain}
\newtheorem{theorem}{Theorem}[section]
\newtheorem{proposition}[theorem]{Proposition}

\theoremstyle{definition}
\newtheorem{definition}[theorem]{Definition}

\theoremstyle{remark}

\usepackage[textsize=tiny]{todonotes}



\begin{document}

\twocolumn[

  \icmltitle{Distinguishing Imitation Error from Intrinsic Motion Learning Difficulty}



  \icmlsetsymbol{equal}{*}

  \begin{icmlauthorlist}
    \icmlauthor{Zhaorui Meng}{yyy}
    \icmlauthor{Lu Yin}{yyy}
    \icmlauthor{Xinrui Chen}{yyy}
    \icmlauthor{Chengxu Zuo}{yyy}
    \icmlauthor{Anjun Chen}{yyy}
    \icmlauthor{Shihui Guo}{yyy}
    \icmlauthor{Yipeng Qin}{sch}
  \end{icmlauthorlist}

  \icmlaffiliation{yyy}{School of Informatics, Xiamen University, Xiamen, China}
  \icmlaffiliation{sch}{School of Computer Science and Informatics, Cardiff University, Cardiff, United Kingdom}

  \icmlcorrespondingauthor{Anjun Chen}{anjunchen@xmu.edu.cn}

  \icmlkeywords{Human Motion, Imitation Learning}

  \vskip 0.3in
]



\printAffiliationsAndNotice{}  
\begin{abstract}
Physics-based motion imitation is central to humanoid control, yet current evaluation metrics (e.g., MPJPE) only quantify imitation outcomes, not their underlying causes. This conflation obscures a critical diagnostic question: \textit{when imitation error occurs, does it stem from policy limitations or the intrinsic learning difficulty of the target motion?} To resolve this ambiguity, we propose the \textbf{Torque Variation Score (TVS)}, a physics-grounded metric that quantifies the inherent learning difficulty of a motion independently of any policy's performance. TVS measures the magnitude of torque variation required to correct small pose perturbations, directly capturing how dynamical properties shape the reinforcement learning landscape. We establish that high-TV motions induce flat reward landscapes and vanishing policy gradients, explaining persistent imitation failures. Extensive experiments with state-of-the-art methods (UHC, PHC+) confirm TVS strongly correlates with imitation error and enables principled error attribution: high error on low-TV motions indicates policy deficiency, while high error on high-TV motions reflects fundamental learning constraints. Beyond error diagnosis, TVS facilitates three practical applications: \textit{Maximum Imitable Difficulty (MID)} for policy capability assessment, \textit{Difficulty-Stratified Joint Error (DSJE)} for granular performance profiling, and \textit{Flawed Motion Detection} for identifying segments with abnormally high learning difficulty to support mocap data curation and quality control. TVS provides a rigorous lens to distinguish policy-induced errors from motion-inherent challenges and enhances motion dataset reliability.

\end{abstract}

\begin{figure}[!ht]
  \centering
  \includegraphics[width=1.0\linewidth]{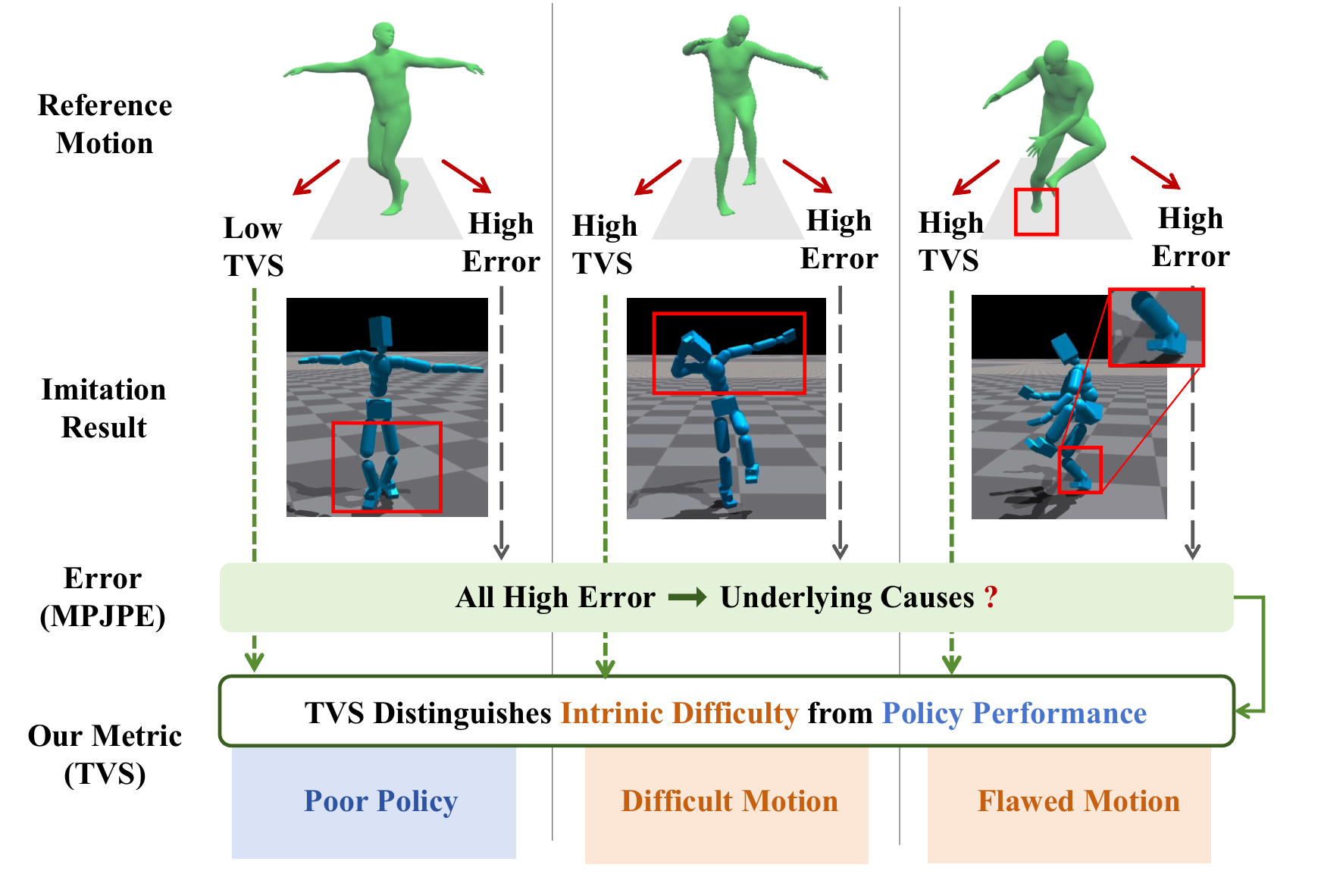}
  \caption{Existing metrics assign high error to all three scenarios above equally. In contrast, our Torque Variation Score (TVS) quantifies intrinsic difficulty based on rigid-body dynamics. This allows us to attribute error correctly: high error on low-TVS motions exposes policy limitations (left), whereas high error on high-TVS motions reflects intrinsic difficulty (center and right).}
  \label{fig:teaser}
\end{figure}

\section{Introduction}
\label{sec:intro}
Physics-based motion imitation learning generates physically plausible human-like motions and serves as a cornerstone for humanoid robotics~\cite{luo2021dynamics, luo2023perpetual, luo2023universal}. In simulation, reinforcement learning (RL) policies use human motion capture data to drive humanoid avatars, taking a critical step toward real-world deployment. While methods like UHC~\cite{luo2021dynamics} and PHC+~\cite{luo2023perpetual} achieve impressive results on large motion datasets, performance remains strikingly inconsistent: under identical training protocols, some motions are replicated with high fidelity, while others exhibit large errors or catastrophic failures. This inconsistency raises a fundamental diagnostic question: \textbf{when imitation error occurs, does it stem from policy limitations or the intrinsic learning difficulty of the target motion?} Current evaluation metrics (e.g., MPJPE) only measure final tracking error, conflating these distinct sources. Without quantifying a motion's inherent learning difficulty independently of policy performance, researchers cannot reliably diagnose error origins, hindering fair comparison, targeted improvement, and meaningful progress assessment in imitation learning.

To address this, we propose the Torque Variation Score (TVS), a physics-grounded metric that quantifies the intrinsic learning difficulty of a motion based solely on its dynamical properties. Derived from rigid-body dynamics, TVS measures the torque variation required to correct small pose perturbations. Intuitively, motions demanding large torque adjustments for minor corrections reside in regions of flat reward landscapes, where standard RL exploration yields negligible state changes and uninformative gradients, making them inherently harder to learn. TVS aggregates the volume, variance, and temporal structure of these torque adjustments to capture this sensitivity comprehensively. We establish a theoretical link between TVS and RL optimization: high-TV motions induce a ``vanishing gradient" effect where action noise produces minimal state deviation, starving the policy of learning signals. This explains persistent failures on dynamically sensitive motions (e.g., single-leg stances) versus robust learning on tolerant motions (e.g., hand waving). Crucially, TVS operates \textit{independently of any policy} as it characterizes the motion's learning landscape itself. 

We validate TVS through extensive experiments with state-of-the-art policies (UHC~\cite{luo2021dynamics}, PHC+~\cite{luo2023perpetual}) on standard datasets (AMASS~\cite{mahmood2019amass}). TVS strongly correlates with imitation error across diverse motions and enables principled error attribution: high error on low-TV motions signals policy deficiency; high error on high-TV motions reflects fundamental learning constraints imposed by the motion. 

As practical demonstrations of TVS's diagnostic utility, we present three applications: \textit{Maximum Imitable Difficulty (MID)} identifies the hardest motions a policy can reliably learn; \textit{Difficulty-Stratified Joint Error (DSJE)} reveals policy strengths/weaknesses across difficulty tiers; and \textit{Motion Anomaly Detection} leverages TVS spikes to flag motion segments with abnormally high learning difficulty (e.g., caused by marker occlusion, self-occlusion, or sensor artifacts), directly supporting mocap data curation and quality assurance. We publicly release TVS scores for major motion datasets to facilitate community adoption and diagnostic analysis. In summary, our contributions are as follow:
\begin{itemize}
    \item A physics-grounded, policy-agnostic metric that quantifies the intrinsic learning difficulty of motion sequences through torque sensitivity analysis derived from rigid-body dynamics, which we call \textbf{Torque Variation Score (TVS)}
    \item We establish a theoretical connection between TVS and RL optimization challenges (flat reward landscapes, vanishing gradients), empirically validate its strong correlation with imitation error across policies and datasets, and enable principled error attribution between policy limitations and motion-inherent difficulty.
    \item We demonstrate TVS's practical utility through three diagnostic applications: (1) \textit{Maximum Imitable Difficulty (MID)} for policy capability benchmarking, (2) \textit{Difficulty-Stratified Joint Error (DSJE)} for granular performance profiling, and (3) \textit{Motion Anomaly Detection} for identifying problematic motion segments in motion capture datasets to enhance MoCap data quality and curation workflows.
\end{itemize}

\section{Related Works}
\label{sec:related}

\subsection{Physics-based Human Motion Imitation}

Physics-based human motion imitation aims to create simulated characters governed by physical laws \cite{bergamin2019drecon, chentanez2018physics, fussell2021supertrack, gong2022posetriplet, hasenclever2020comic, merel2020catch, peng2017deeploco, peng2018deepmimic, peng2019mcp, peng2021amp, peng2022ase, wang2020unicon, winkler2022questsim, yuan2020residual, dou2022case, ControlVAE, 10.1145/3658137}, enabling the generation of dynamically plausible human behaviors. Due to the well-known generalization challenges in reinforcement learning based controllers, early efforts focused on small-scale tasks, such as interactive control via user input \cite{bergamin2019drecon, peng2022ase, peng2021amp, wang2020unicon} or modular skills like locomotion \cite{peng2017deeploco} and dribbling \cite{peng2021amp}. Recent advances have shifted toward large-scale datasets and multi-sequence imitation. ScaDiver~\cite{won2020scalable} pioneered large-scale imitation of the CMU MoCap dataset~\cite{AMASS_CMU} using a mixture-of-experts framework, while MoCapAct~\cite{wagener2022mocapact} distilled per-clip experts into a unified policy with minimal performance loss. Most recently, universal motion imitation using a single policy has emerged as a promising direction toward general humanoid control. UHC~\cite{luo2021dynamics} successfully imitates 97\% of the AMASS dataset by introducing residual root forces~\cite{yuan2019ego} for balance recovery. Its successor, PHC~\cite{luo2023perpetual}, eliminates all non-physical forces, achieving superior physical fidelity while preserving high tracking accuracy. Related works also explore reference tracking \cite{chao2021learning, chentanez2018physics, park2019learning, peng2018deepmimic, peng2018sfv, won2020scalable, yuan2020residual} and object interaction \cite{chao2021learning, merel2020catch} in simulation. Among these methods, UHC and PHC stand out for their ability to handle diverse daily motions and interactions without task-specific tuning, demonstrating robust generalization. Despite these advances, existing evaluations and reward designs rely almost exclusively on similarity-to-reference metrics (e.g., joint position error) across heterogeneous motion datasets, overlooking intrinsic motion properties such as dynamic stability and control fragility.

\subsection{Human Motion Benchmark}

Human motion data serves as a foundational resource for humanoid control, character animation, motion capture, and related applications. In recent years, the widespread adoption of unified parametric models~\cite{loper2023smpl, pavlakos2019expressive} has enabled large-scale datasets to be standardized and broadly applied across motion research. To address limitations in data scale and diversity, recent efforts have turned to internet videos as a rich source for constructing new datasets \cite{lin2023motion, zhang2025motion, chung2021haa500, cai2022humman, tsuchida2019aist}. Humanoid-X~\cite{mao2025universal} stands out by integrating large-scale video-reconstructed motions with high-fidelity motion capture subsets, creating a hybrid dataset of unprecedented scope. PHUMA~\cite{lee2025phuma} aim to enhance physical plausibility. Meanwhile, datasets such as \cite{AMASS_CMU, zhang2022egobody, al2023locomujoco} are widely adopted due to their high kinematic accuracy and rich action variety, though constrained by complex camera calibration and marker-based setups. For tasks like general motion imitation and physics-based motion reconstruction, large volumes of uniformly represented data are essential. H36M~\cite{Ionescu2014Human36MLS} provides extensive video-captured human motion, while the widely adapted AMASS dataset~\cite{mahmood2019amass} unifies diverse sources under the SMPL parameterization. The community has also produced specialized datasets: \cite{tsuchida2019aist, Li_2023_ICCV, AMASS_DanceDB} focus on dance by curating motion sequences synchronized with musical beats—a core aspect of expressive human movement, while BMLHandball~\cite{HELM2017981} builds a database of penalty throw actions. These efforts reflect a growing demand for task-specific motion diversity. However, most existing approaches collect data based on semantic categories (e.g., "dance" vs. "locomotion"). While practitioners observe that dance motions are harder to imitate than daily actions, no analysis quantifies the intrinsic difficulty of individual motions, nor explains \emph{why} certain motions are harder. As a result, difficulty-aware applications and in-depth evaluations remain largely unexplored.

\section{Motivation and Preliminaries}
Physics-based human motion imitation leverages physically simulated environments to drive a torque-controlled avatar in mimicing a reference motion. A central challenge in this process is the avatar's propensity to deviate from the reference, often resulting in stumbles or catastrophic falls. To address this, the research community has explored numerous strategies, including RL reward design~\cite{peng2017deeploco,peng2021amp}, policy architecture innovation~\cite{won2020scalable,luo2023perpetual}, and motion preprocessing~\cite{Zhang_2025_ICCV}, all unified by the goal of maximizing similarity to the reference, irrespective of the reference motion itself. Yet, {\it tracking fidelity is not solely a function of policy capability; it is co-determined by the intrinsic difficulty of the motion}. Empirically, a well-trained policy exhibits striking performance variance across training motions (e.g., joint position error ranging from $\sim$16 mm on static T-pose to over 150 mm on dynamic motions such as long jump) despite identical training conditions. Therefore, explicitly quantifying motion difficulty is pivotal: it forms the foundation for fairer policy evaluation, difficulty-aware curriculum learning, and physics-plausible restoration of flawed motions, opening new research frontiers in imitation learning. 
In this work, we aim to fill this gap by introducing a new metric, orthogonal to motion accuracy, that defines and quantifies motion difficulty.

\vspace{1mm}
\noindent {\bf Problem Setup and Notations.}
Our approach is grounded in rigid-body dynamics. We adopt the standard human motion imitation setup with a torque-controlled, floating-base humanoid~\cite{7029952} and follow the mass distribution protocol from~\cite{PhysCapTOG2020}. We define key relevant 3D human motion properties, including:
\begin{itemize}
    \item {\it Joint rotations} (i.e., pose) \( \theta \in \mathbb{R}^{3J} \) represented by Euler angles,  with their time derivatives \( \dot{\theta} \) and \( \ddot{\theta} \) representing the angular velocity and angular acceleration; 
    \item {\it Global joint positions} \( r \in \mathbb{R}^{3J} \) with their time derivatives \( \dot{r} \) and \( \ddot{r} \) representing the linear velocity and acceleration in the global coordinate system, respectively.   
\end{itemize}
We use \(J = 24\) to denote the number of joints in the SMPL model.  
Furthermore, we have:
\begin{itemize}
    \item {\it Character configuration} \( q = [r_{\text{root}}, \theta] \in \mathbb{R}^N \) defined by its local pose and global translation, where \( r_{\text{root}} \in \mathbb{R}^3 \) is global translation of the root joint and \( N = 3 + 3J \) is the number of degrees of freedom (DoF). The time derivatives \( \dot{q} \) and \( \ddot{q} \) represent the generalized velocity and acceleration, respectively. 
\end{itemize}

\vspace{1mm}
\noindent {\bf Equation of Rigid-Body Motion.}
In rigid-body dynamics, we define the torque vector \( \tau \in \mathbb{R}^N \) for the controlled character, where each dimension corresponds to the force applied to the respective degree of freedom. Then, the character configuration \( q \), the torque \( \tau \), the generalized velocity \( \dot{q} \) and acceleration \( \ddot{q} \), satisfy the following equation~\cite{featherstone2008rigid}:
\begin{equation}
M(q) \ddot{q} + h(q, \dot{q}) = \tau + {f}_{ext}
\label{eq:motion_equation}
\end{equation}
where \( M \in \mathbb{R}^{N \times N} \) is the inertia matrix; \( h \in \mathbb{R}^{N} \) is the nonlinear term accounting for gravity, Coriolis, and centrifugal forces; and \({f}_{ext}\) denotes the external forces from the surrounding environment of the character.

\begin{figure*}[t]
	\centering
 \includegraphics[width=1\linewidth]{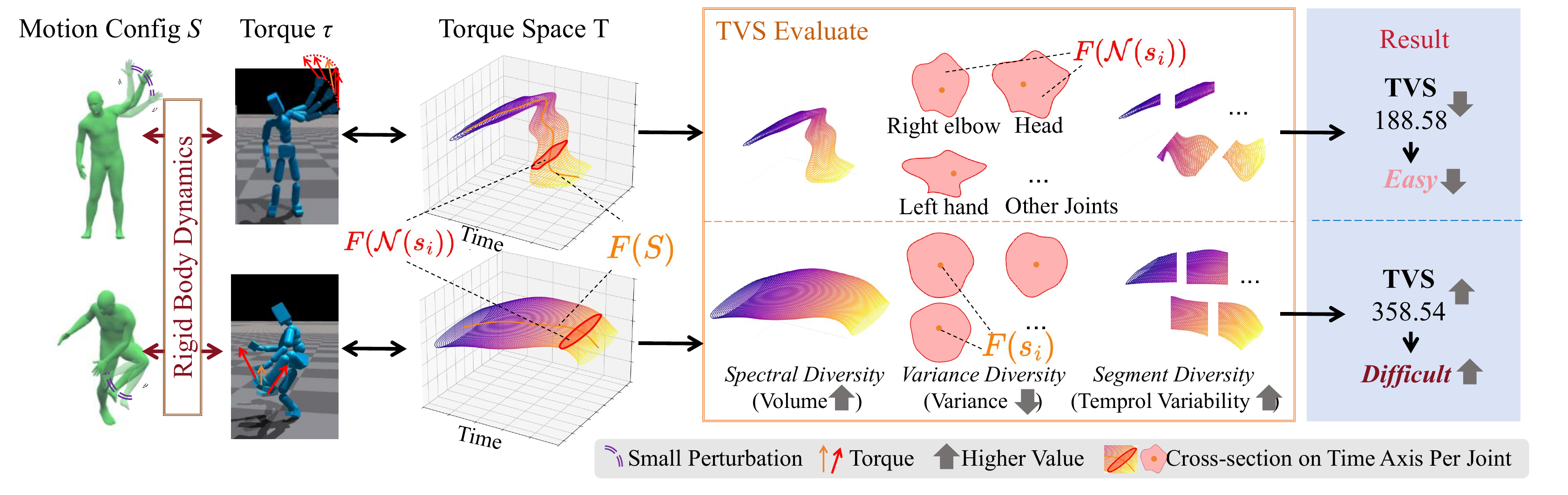}
	\caption{Illustration of TVS: For an \emph{easy} motion (top-left), small pose perturbations induce small torque variance and hence low sensitivity to perturbation, the 1) smaller torque space volume, 2) larger variation across joints (larger volume on the waving hand joints than others) and 3) larger temporal variability makes TVS rates the motion as easier. In contract, for a 
    a \emph{difficult} motion (bottom-left), the same level of perturbation yields large torque variance across all joints, leading to a high TVS.
}
	\label{fig:mds}
\end{figure*}

\section{The Torque Variation Score (TVS) for Quantifying Motion Learning Difficulty}
\label{sec:tv_score}

The Torque Variation Score (TVS) provides a physics-grounded measure of the \textit{intrinsic learning difficulty} inherent to a motion sequence—defined as the challenge a motion poses to reinforcement learning optimization, independent of any specific policy. In this section, we formalize TVS from rigid-body dynamics (Sec.~\ref{sec:learning_difficulty_definition}), detail its computation (Sec.~\ref{sec:tv_computation}), and describe its application to standard motion datasets (Sec.~\ref{sec:tv_annotation}).

\subsection{Quantify Motion Learning Difficulty via Torque Variation}
\label{sec:learning_difficulty_definition}

From rigid-body dynamics (Eq.~\ref{eq:motion_equation}), it can be observed that for any state \(q\), a perturbation in \(q\) induces a corresponding change in the torque space. Therefore, we assume that Motion Learning Difficulty can be quantify via motion's Torque Variation: {\it The success of imitation learning hinges on how strongly the required torques vary under small motion perturbations.}

Specifically, when tiny pose deviations trigger large torque changes, reinforcement learning (and gradient-based optimization) encounters flat or plateau-like reward (and loss) landscapes: substantial torque updates produce nearly identical poses and thus yield nearly identical rewards. In such regions, the reward becomes weakly informative with respect to torque adjustments, providing very little useful improvement signals.

Intuitively, as illustrated in Fig.~\ref{fig:mds}, for an easy motion such as a hand wave (Fig.~\ref{fig:mds} top left), a small positional change induces only a mild torque adjustment, and the policy receives clear improvement signals. For a difficult motion such as a single-leg stance (Fig.~\ref{fig:mds} bottom left), the reward landscape collapses into a Dirac-delta–like spike, where reward is concentrated within an extremely narrow region and flattens almost immediately elsewhere. In this regime, the policy explores torque updates in many directions on the plateau, yet the resulting rewards are nearly identical.

We therefore define the Torque Variation of motion sequence \( S \) as the magnitude of torque variation induced within a bounded pose-error neighborhood:
\begin{definition}[Torque Variation of Motion]
\label{def:motion_difficulty}
Let  $ s = (q, \dot{q}, \ddot{q}) $  denote per-frame state. For motion sequence  $ S = (s_1, \dots, s_t) $ , let  $ \mathcal{N}(S) $  be a bounded neighborhood of pose-perturbed sequences. The torque variations induced are  $ \mathcal{T} = \{ T \mid T = F(\hat{S}), \hat{S} \in \mathcal{N}(S) \} $ . The \textbf{intrinsic learning difficulty} of  $ S $  is characterized by 1) \textbf{Volume}: Total magnitude of torque variation; 2) \textbf{Variance}: Spread of torque adjustments across joints; 3) \textbf{Temporal variability}: Evolution of torque sensitivity over time.
\end{definition}

\subsection{Computing the Torque Variation Score (TVS)}
\label{sec:tv_computation}

TVS is quantification of Torque Variation that aggregates three complementary proxies reflecting Definition~\ref{def:motion_difficulty} (see Fig.~\ref{fig:mds} for intuition):

\subsubsection{Spectral Diversity}

For a given motion sequence \(S\), spectral diversity is used to model the volume \(\text{vol}_Y(\mathcal{T})\), \textit{i.e.}, the Lebesgue measure in torque space. By applying the coarea formula and normalizing by \(\text{vol}_X(\mathcal{N}(S))\), we have :
\begin{equation}
\begin{aligned}
\label{eq:vol_y}
\text{vol}_Y(\mathcal{T}) \propto \bigl[ \text{vol}_X(\mathcal{N}(S)) \bigr]^{1/3} \cdot \prod_{i=1}^t \sqrt{\det(G_i)},
\end{aligned}
\end{equation}
where \( G_i = dF_i(s_i) \, dF_i(s_i)^T \). Thus, the volume \(\text{vol}_Y(\mathcal{T})\) is determined by \( \prod_{i=1}^t \sqrt{\det(G_i)} \).

Summing over all frames, the log-volume of the total torque variation is:
\begin{equation}
\label{eq:log_vol_sum}
\log \text{vol}_Y(T) \propto \sum_{i=1}^r \log \sigma_i.
\end{equation}
where \( r = \min(t, J \cdot D) \). This shows that the total torque variation grows with the sum of log-singular values across all joints and frames.

We intentionally omit ground contact forces, as they are mathematically negligible for the daily human motions we study. Specifically, from the rigid-body dynamics equation, when computing the torque sensitivity via the Jacobian matrix $J = dF$, the formulation requires computing the derivatives of the internal dynamics terms $(dM,dh)$; and the external contact force $(df_{ext})$. According to (Winter, 2009), external ground reaction forces $f_{ext}$ during daily motions are predominantly governed by Center of Mass (CoM) kinematics and gravity. Because small local pose perturbations do not significantly alter the global CoM trajectory, the variation in external contact forces $df_{ext}$ is infinitesimally small and mathematically negligible compared to $dM$ and $dh$ .To provide an intuitive quantitative comparison, we validate this using the real-world dataset GroundLink~\cite{10.1145/3610548.3618247}. During a walking motion, the peak variation of external forces ($|\frac{\partial f_{ext}}{\partial q}|$) is merely 2.4. Even for highly dynamic motions with larger impact frequency, such as hopping, this peak only reaches 7.9. Conversely, on a regular walking motion, $|\frac{\partial (M\ddot{q})}{\partial q}|$ averages 1046, and $|\frac{\partial h}{\partial q}|$ peaks at 1288. This extreme ratio (~0.23$\%$) demonstrates that $df_{ext}$ is mathematically safe to ignore.While one might hypothesize extreme theoretical scenarios where massive, discontinuous contact forces could influence the computation, our observations suggest such cases are likely non-existent in natural human biomechanics.

Thus, we define {\bf Spectral Diversity} as:
\begin{equation}
\label{eq:spectral_diversity}
d_1 = \sum_{i=1}^r (\log \sigma_i),
\end{equation}
Spectral Diversity thus captures the effective dimensionality of the feasible torque space, higher values indicating higher torque variation and higher difficulty. {\bf  Due to space constraints, we provide the detailed derivation for the log-volume of torque variation in the Appendix}.

\subsubsection{Variance Diversity}

Eq.~\ref{eq:vol_y} shows that the volume of the torque space is directly related to \(G_i\) , hence to the Jacobian matrix \(J = dF_i \). Therefore, we define {\bf Variance Diversity} as the variation of the Jacobian across different joints:
\begin{equation}
    d_2 = \sum_{j=1}^J \log (\text{Var}_j),
\end{equation}
where:
\begin{equation}
    \text{Var}_j = \mathbb{E}_{t,k} [\mathbf{J}_{t,j,k}^2] - (\mathbb{E}_{t,k} [\mathbf{J}_{t,j,k}])^2,
\end{equation}
and \(t\) is the time index, \(k\) is the index over rotation directions. The variance \(\text{Var}_j\) quantifies the spread in torque variation for joint \(j\), and \(d_2\) aggregates this across all joints via a geometric mean.
Because variance in the Jacobian entries corresponds to the magnitude of singular values, Variance Diversity captures the extent of joint-wise torque variation, where smaller variance signifies higher motion difficulty.

\begin{table*}[ht!]
\centering
\caption{Correlation between TVS and imitation error (MPJPE-G) across policies. Higher TVS consistently predicts higher error, validating TVS as a measure of intrinsic learning difficulty. Representative samples illustrate the TVS-error relationship.}
\label{tab:tv_vs_error}
\footnotesize
\setlength{\tabcolsep}{3.5pt}
\resizebox{\textwidth}{!}{%
\begin{tabular}{l lcc lcc  lcc lcc}
\toprule
 & \multicolumn{6}{c}{\textbf{UHC}} &
\multicolumn{6}{c}{\textbf{PHC+}} \\
\cmidrule(lr){2-7} \cmidrule(lr){8-13}
\multirow{2}{*}{\textbf{Correlation}} & 
\multicolumn{2}{c}{\textbf{Pearson}} & \multicolumn{2}{c}{\textbf{Spearman}} & \multicolumn{2}{c}{\textbf{Kendall}}&
\multicolumn{2}{c}{\textbf{Pearson}} & \multicolumn{2}{c}{\textbf{Spearman}} & \multicolumn{2}{c}{\textbf{Kendall}} \\

 & \multicolumn{2}{c}{0.4823} & \multicolumn{2}{c}{0.6586} & \multicolumn{2}{c}{0.4747}&
\multicolumn{2}{c}{0.6478} & \multicolumn{2}{c}{0.8203} & \multicolumn{2}{c}{0.6293} \\
\midrule
\multirow{10}{*}{\textbf{Samples}} &
\textbf{Motion} & \textbf{TVS} & \textbf{MPJPE-G} &
\textbf{Motion} & \textbf{TVS} & \textbf{MPJPE-G} &
\textbf{Motion} & \textbf{TVS} & \textbf{MPJPE-G} &
\textbf{Motion} & \textbf{TVS} & \textbf{MPJPE-G} \\
\cmidrule(lr){2-4} \cmidrule(lr){5-7} \cmidrule(lr){8-10} \cmidrule(lr){11-13}
&hand\_waving & 188.58 & 16.67 & tpose       & 215.71 & 16.34 &
walk\_slow        & 209.82 & 19.75 & touch\_head       & 229.73 & 36.39 \\
&tpose2        & 267.20 & 23.29 & throw         & 259.71 & 25.28 &
wave       & 225.34 & 41.23 & right\_punch   & 227.89 & 43.77 \\
&tennis       & 273.48 & 26.20 & twist   & 270.28 & 29.77 &
spin  & 257.09 & 55.98 & wipe        & 266.71 & 44.28 \\
&small\_jump  & 309.36 & 38.21 & dodge        & 325.62 & 41.42 &
throw       & 259.71 & 56.15 & small\_step         & 280.27 & 50.23 \\
&walk         & 337.27 & 48.20 & kick   & 328.67 & 55.56 &
reach\_high       & 304.39 & 60.17 & punch        & 318.90 & 53.17 \\
&dodge2        & 342.72 & 48.45 & jump          & 333.87 & 65.31 &
stretch          & 317.90 & 60.52 & walk   & 337.27 & 65.85 \\
&spin2 &355.95  &  62.57 &run  &  358.97  & 61.47 &  
run& 355.76 & 61.67 & kick & 328.67 &73.65  \\
&hop  & 358.54 & 63.40 &
spin2 & 355.95 & 62.56 &
run\_fast       & 349.02 & 103.23  &      hop        &   358.54     &   73.85    \\
& fast\_jump &     316.30   & 130.68 &
run\_fast          &    349.02    &     150.21  &           
    spin2          &    355.95    &     70.10  &          fast\_spin  &     372.48&75.00   \\
\bottomrule
\end{tabular}
}
\end{table*}

\begin{figure*}[t]
	\centering
 \includegraphics[width=1\linewidth]{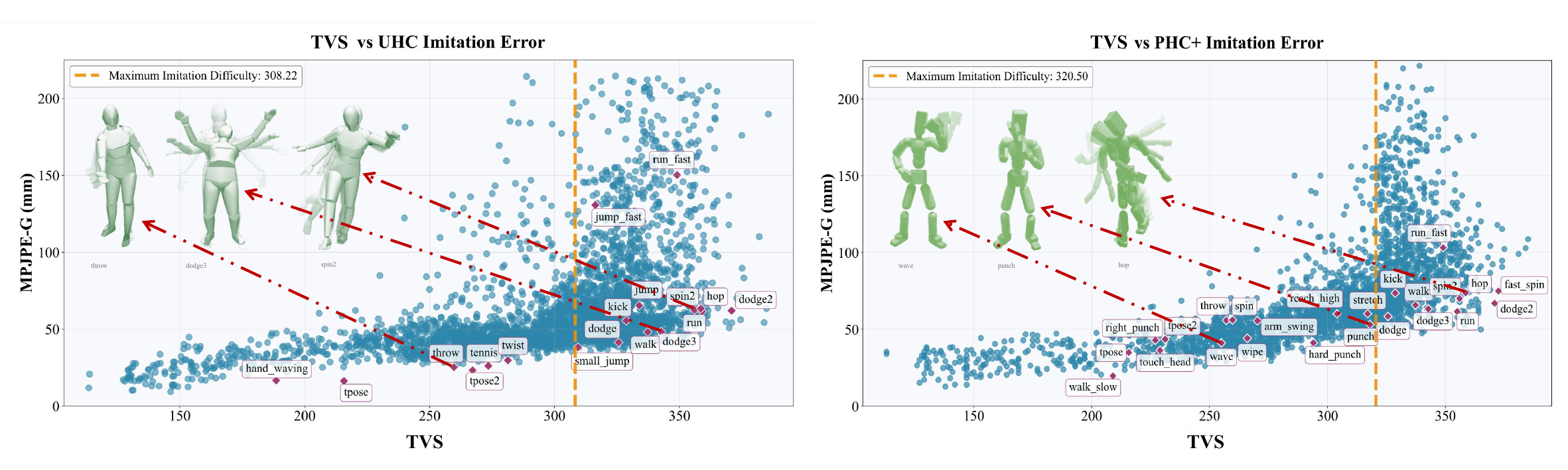}
	\caption{We plot scatters of TVS versus imitation error on different polices for over 3000 motion clips (left: UHC; right: PHC+), with samples drawn from the policies' training set. As TVS increases from left to right (indicating higher difficulty), error rises from bottom to top, demonstrating that TVS effectively captures motion difficulty. 
}
	\label{fig:scatter}
\end{figure*}

\subsubsection{Segment Diversity}

To capture temporal evolution of torque variations, we partition the motion sequence into \( K \) non-overlapping subsequences and compute {\bf Segment Diversity} as the sum of spectral diversity per segment:  
 \begin{equation}
 d_3 = \frac{1}{K} \sum_{k=1}^K d_1(J_k) \propto \frac{1}{K} \sum_{k=1}^K \log V_s = \log \left( \prod_{s=1}^S V_s \right)^{1/S},
 \end{equation} 
  where \( J_k \) is the Jacobian sequence in segment \( k \). From inequality of arithmetic and geometric mean, we have:
 \begin{equation}
 d_3 \propto \log \left( \prod_{s=1}^S V_s \right)^{1/S} \leq \frac{1}{S} \sum_{s=1}^S V_s.
 \end{equation}
The upper bound of the inequality is attained if and only if $d_1(J_1)=d_1(J_2)=\cdots=d_1(J_K)$, at which point the segment diversity reaches its maximum value. This means that for a motion sequence \(S\) with fixed total torque volume, \(d_3\) is maximized when all segments have equal volume, hence more difficult to imitate, consistent with the \textit{Temporal Variability} factor. Conversely, \(d_3\) decreases as the volume disparity between segments grows.
Intuitively, a motion containing both relatively easy and difficult segments can facilitate learning because these segments are correlated: progress on the easier portions provides intermediate structure that guides the policy toward successfully handling the more challenging parts. 
For instance, in a kicking motion, the initial weight-shifting and leg-lifting phases are relatively easy, while the final fast extension and balance recovery are more difficult. As the policy masters the easier early phases, it gains stable trajectories and meaningful reward signals that guide it toward successfully learning the harder components.
Empirically, we set \(K = 4\) in our experiments to effectively capture temporal dynamics.

\subsubsection{Final Formulation}

The final Torque Variation Score (TVS) is computed as the weighted sum of the three diversity terms, the weight assignments are empirically derived:
\begin{equation}
\text{TVS} = \sum_{i=1}^{3} w_i d_i,
\end{equation}
We choose $\omega_1,\omega_2,\omega_3 = [0.4, 0.3, 0.3]$ in our implementations. 

\textbf{Detailed hyperparameter analyses, including $\omega_1, \omega_2, \omega_3$ and the segment count $K$, are provided in the Appendix.}

\subsection{TVS-annotated Motion Dataset}
\label{sec:tv_annotation}

To facilitate community adoption and diagnostic analysis, we compute TVS for all short clips (100 frames) in the AMASS dataset~\cite{mahmood2019amass}, following standard segmentation practices in imitation learning. This yields \textbf{TVS-annotated AMASS}: over 30,000 clips each labeled with a scalar TVS value. Crucially, this is \textit{not} a new benchmark dataset but a \textit{difficulty annotation} of an existing resource. The modular TVS computation pipeline is publicly released to enable annotation of any motion dataset (e.g., AIST++). This supports—rather than prescribes—difficulty-aware analysis.


\section{Empirical Validation of TVS}
\label{sec:validation}

In this section, we validate TVS not as a benchmark, but as a diagnostic measure of learning difficulty through correlation with policy errors and ablation studies. Our experiments reveal a strong positive correlation between TVS and policy errors, confirming that higher TVS scores accurately predict greater difficulty in motion imitation.


\subsection{Correlation with Imitation Error}
\label{sec:correlation}

Using TVS-annotated AMASS, we evaluate state-of-the-art policies (UHC, PHC+) on over 3,000 motion clips. Fig.~\ref{fig:scatter} shows strong positive correlation between TVS and MPJPE-G (e.g. global mean per-joint position error): higher TVS consistently predicts higher imitation error. Table.~\ref{tab:tv_vs_error} reports Pearson, Spearman, and Kendall correlations—all significantly positive—confirming TVS captures intrinsic learning difficulty. Correlation strength is typically interpreted using standardized effect size conventions rather than by direct comparison of raw values. For example, according to the established statistical guidelines by Jacob Cohen in Statistical Power Analysis for the Behavioral Sciences, a Spearman correlation of r = 0.10 is considered ``Small", r = 0.30 is ``Medium", and r $\geq$ 0.50 is ``Large/Strong". In our experiments, the Spearman correlations for both policies consistently exceed 0.60 (0.658 for UHC and 0.820 for PHC+), demonstrating a robust and strong correlation across the board. In other words, the observed "strong correlation" between TVS and imitation errors is insensitive to the specific policy used. Qualitatively (Fig.~\ref{fig:qualitative}), low-TV motions (e.g., hand wave) are imitated accurately; high-TV motions (e.g., fast spin) show clear degradation where policies trade precision for stability due to uninformative gradients. 

\textbf{Due to space constraint, we provide more detailed validation results with over 10000 samples in the Appendix.}

\begin{figure}[t]
    \centering
 \includegraphics[width=1.0\linewidth]{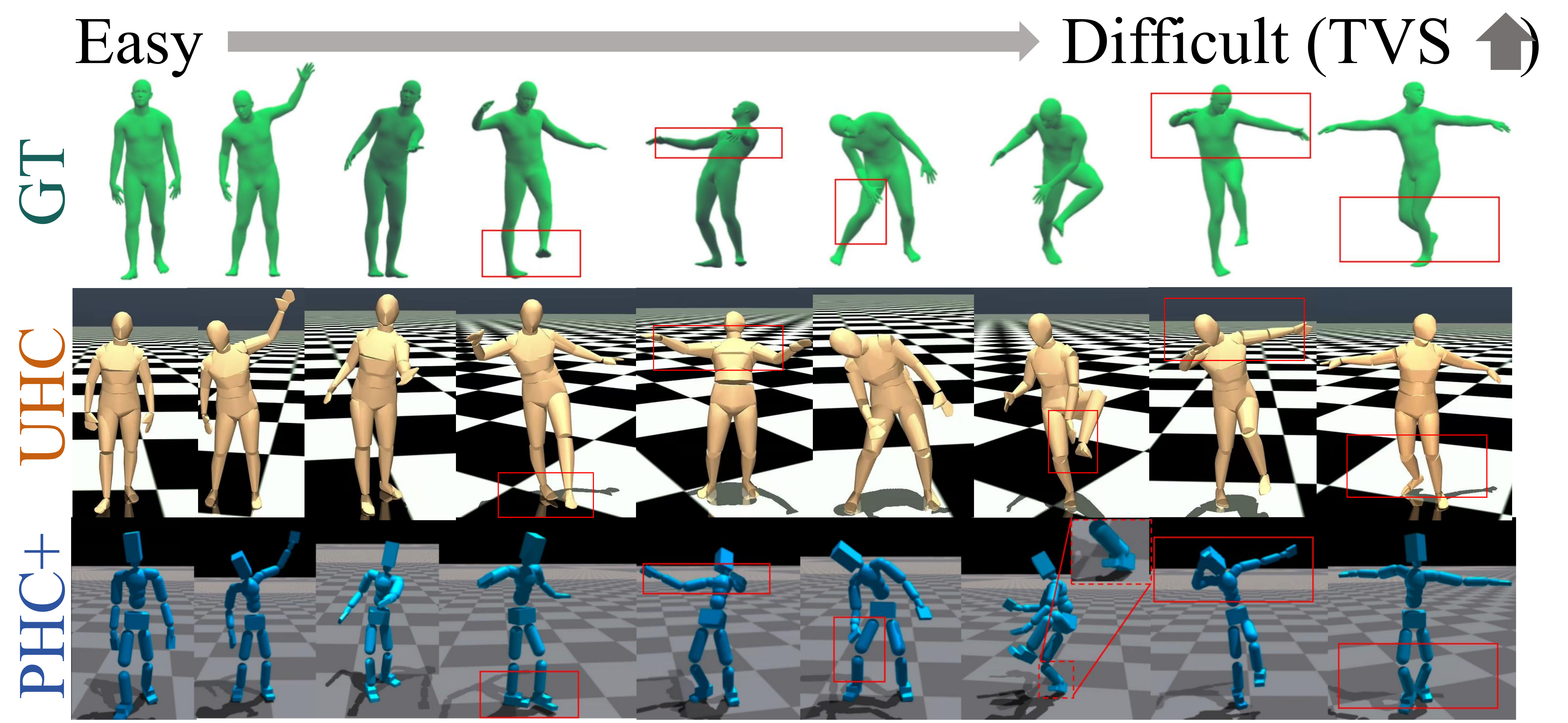}
    \caption{Imitation fidelity degrades as TVS increases. Low-TV motions (left) are replicated accurately; high-TV motions (right) exhibit errors where torque sensitivity impedes learning.}
    \label{fig:qualitative}
\end{figure}

\subsection{Ablation Study}
\label{sec:ablation}

Table~\ref{tab:ablation} shows ablation results. Removing any component (Spectral/SP, Variance/VA, Segment/SE diversity) reduces correlation with policy error. Full TVS achieves highest scores across all metrics and policies, empirically confirming that all three factors are necessary to capture motion learning difficulty comprehensively.

\subsection{Empirical Evidence for ``Flat Reward Landscapes''}
\label{sec:flat-rwd}
To provide empirical evidence for \textit{why} TVS quantifies motion imitation difficulty, we evaluate the reward sensitivity of a pre-trained policy to verify the theoretical claim of a flat reward landscape, aiming to validate our theory that high Torque Variation (TVS) motions induce ``flat reward landscape''.
Specifically, we apply progressively increasing random pose perturbations (denoted as Perturbation levels $1 < 2 < 3$) to motions of varying difficulty and track the corresponding reward. The results are summarized in Table~\ref{tab:reward-landscape}. For Low-TVS (easy) motions, the reward drops significantly and monotonically as the perturbation magnitude increases. This steep gradient in the reward landscape provides a clear, informative learning signal, effectively guiding the policy toward the correct reference pose. In contrast, for High-TVS (difficult) motions, the reward landscape collapses into a plateau. Despite increasing perturbation magnitudes, the reward remains nearly static (hovering around $\sim$187), exhibiting negligible variance. Consequently, as the policy explores drastically different perturbed states, it receives virtually indistinguishable rewards. This lack of differential feedback induces a vanishing gradient effect, starving the policy of actionable learning signals. These empirical measurements  validate our theoretical assertion that high dynamic sensitivity (high TVS) induces flat reward landscapes, fundamentally bottlenecking the imitation learning optimization process.

\begin{table}[t!]
\centering
\caption{Ablation study: Correlation of partial TVS formulations with imitation error. Full TVS achieves highest correlation, validating the necessity of all three components.}
\label{tab:ablation}
\begin{tabular}{lccc}
\toprule
Score & Pearson & Spearman & Kendall \\
\hline
\multicolumn{4}{c}{\textbf{UHC}} \\
\hline
\textit{w/o VA} & 0.4455 & 0.6218 & 0.4443 \\
\textit{w/o SE} & 0.4475 & 0.6231 & 0.4455 \\
\textit{w/o SP} & 0.4126 & 0.5847 & 0.4149 \\
TVS (full) & \textbf{0.4823} & \textbf{0.6586} & \textbf{0.4747} \\
\hline
\multicolumn{4}{c}{\textbf{PHC+}} \\
\hline
\textit{w/o VA} & 0.6006 & 0.7963 & 0.6005 \\
\textit{w/o SE} & 0.6026 & 0.7966 & 0.6014 \\
\textit{w/o SP} & 0.5760 & 0.7803 & 0.5867 \\
TVS (full) & \textbf{0.6478} & \textbf{0.8203} & \textbf{0.6293} \\
\bottomrule
\end{tabular}
\end{table}

\begin{table}[t!]
\centering
\caption{Reward responses of the pre-trained policy (PHC+) under progressively increasing pose perturbations for Low-TVS and High-TVS motions. High-TVS motions exhibit reward plateauing, empirically validating the flat optimization landscape claim.}
\label{tab:reward-landscape}
\resizebox{\columnwidth}{!}{%
\begin{tabular}{l|cc|cc}
\toprule
\multirow{2}{*}{\textbf{Perturbation Level}} & \multicolumn{2}{c|}{\textbf{Low TVS ($< 150$)}} & \multicolumn{2}{c}{\textbf{High TVS ($> 350$)}} \\
\cmidrule(lr){2-3} \cmidrule(lr){4-5}
 & \textbf{Reward} & \textbf{$\Delta$ Reward} & \textbf{Reward} & \textbf{$\Delta$ Reward} \\
\midrule
No Perturbation & 263.99 & - & 190.63 & - \\
Perturbation 1  & 253.70 & -10.29 & 187.63 & -3.00 \\
Perturbation 2  & 247.46 & -6.24  & 187.23 & -0.40 \\
Perturbation 3  & 224.75 & -22.71 & 187.74 & +0.51 \\
\bottomrule
\end{tabular}%
}
\end{table}

\section{Diagnostic Applications of TVS}
\label{sec:diagnostic_applications}

TVS enables principled error attribution. In this section, we present three analyses as illustrative demonstrations of its diagnostic utility, 

\subsection{Maximum Imitable Difficulty (MID) as a Diagnostic Illustration}
\label{sec:mid}

\begin{table}[h]
\centering
\caption{Difficulty-Stratified Joint Error (DSJE) at fixed TVS thresholds (mm).}
\label{tab:dje}
\begin{tabular}{lccc}
\toprule
\textbf{Policy} & \textbf{DSJE-200} & \textbf{DSJE-300} & \textbf{DSJE-350} \\
\midrule
UHC             & \textbf{27.79}            & 45.75            & 62.94            \\
PHC+            & 30.38            & \textbf{39.01}            & \textbf{59.50}            \\
\bottomrule
\end{tabular}
\end{table}

Revisiting Fig.~\ref{fig:scatter}, we observe that while both UHC and PHC+ exhibit a strong overall positive correlation with TVS, a growing number of outliers emerge as difficulty increases. These are particularly evident on the right side of the scatter plots, where imitation error surges dramatically for high-TVS samples. The pattern is also reflected in Table~\ref{tab:tv_vs_error} (e.g., run\_fast). Such outliers signify a \emph{breakdown in policy capability} on challenging motions, marking the onset of performance collapse.

To quantify this transition, we exhaustively evaluate all possible TVS split points (e.g., 100, 101, ..., up to the maximum) and, for each, compute the mean MPJPE-G of the left (low-difficulty) and right (high-difficulty) partitions. The split that \emph{maximizes inter-group error disparity} defines the critical difficulty level at which large errors begin to dominate. We term this threshold the \textbf{Maximum Imitable Difficulty (MID)}, a direct measure of a policy’s robustness boundary. We formally define it as:

\begin{equation}
\text{MID} = \underset{c \in \mathcal{C}}{\arg\max} \left[ \mu_{\text{high}}(c) - \mu_{\text{low}}(c) \right],
\end{equation}
where:
\begin{itemize}
    \item $\mathcal{C}$ is the set of all candidate TVS thresholds;
    \item $\mathcal{D}_{\text{low}} = \{S\in \mathcal{S} | \text{TVS}(S)) \leq c\}$ and $\mathcal{D}_{\text{high}} = \{S\in \mathcal{S}  | \text{TVS}(S)) > c\}$ are the low- and high-difficulty partitions, $\mathcal{S} $ is set of all motion clips and $\text{TVS}(S)$ is TVS of $S$;
    \item $\mu_{\text{low}}(c) = \frac{1}{|\mathcal{D}_{\text{low}}|} \sum\limits_{S \in \mathcal{D}_{\text{low}}} \text{MPJPE-G}(S))$ is the mean imitation error for motions with TVS $\leq c$, $ \text{MPJPE-G}(S)$ is MPJPE-G of $S$;
    \item $\mu_{\text{high}}(c) = \frac{1}{|\mathcal{D}_{\text{high}}|} \sum\limits_{S \in \mathcal{D}_{\text{high}}} \text{MPJPE-G}(S))$ is the mean imitation error for motions with TVS $> c$.
    
\end{itemize}

The MID thus identifies the precise TVS value where the \emph{marginal increase in imitation error} is maximized, marking the boundary beyond which the policy's performance degrades substantially. In Fig.~\ref{fig:scatter}, we mark the computed MID with red dashed lines: UHC achieves MID = 308.22, while PHC+ reaches MID = 320.50. This quantitatively confirms PHC+’s superior imitation capacity, with UHC exhibiting earlier onset of outliers, which is a distinction clearly visible in the plots.

\textbf{We provide additional validation on the proposed MID metric in the Appendix.}

\subsection{Difficulty-Stratified Error (DSJE) Analysis}
\label{sec:dsje}

Existing evaluation protocols typically report mean policy error across the entire dataset. While this provides a useful summary of overall performance, we argue that it obscures critical insights: a lower mean error does not guarantee superiority on \emph{every} motion, nor does it reveal where a policy truly excels or falters. To address this limitation, we introduce a difficulty-stratified evaluation using our proposed benchmark. Specifically, we define the Difficulty-Stratified Joint Error (DSJE) at increasing difficulty thresholds:  
\begin{itemize}
    \item DSJE-200: error on easy motions (TVS \(<\) 200),
    \item DSJE-300: error on moderate motions (TVS \(<\) 300),
    \item DSJE-350: error on all motions up to high difficulty (TVS \(<\) 350).
\end{itemize}

The unified formulation for DSJE at threshold $c$ is:

\begin{equation}
\text{DSJE-}c = \frac{1}{|\mathcal{M}_c|} \sum_{S \in \mathcal{M}_c} \text{MPJPE-G}(S),
\end{equation}
where $\mathcal{M}_c = \{S \in \mathcal{S} | \text{TVS}(S) < c\}$, $|\mathcal{M}_c|$ denotes the number of motions in the difficulty-stratified subset.

These metrics are reported in Table~\ref{tab:dje} for both UHC and PHC+. While conventional aggregate evaluation confirms PHC+’s overall superiority over UHC, our DSJE analysis reveals a surprising counter-pattern: PHC+ underperforms UHC on the easiest motions (DSJE-200), despite outperforming on moderate and cumulative sets. This novel finding emerged only through our difficulty-aware lens, and offers deeper diagnostic value. It enables precise characterization of policy behavior, training dynamics, and task-specific applicability: \emph{which model is best suited for which difficulty regime?} Such insights pave the way for targeted improvements, curriculum design, and principled deployment in real-world humanoid control.

\subsection{Flawed Motion Detection}

\begin{figure}[t!]
	\centering
 \includegraphics[width=1\linewidth]{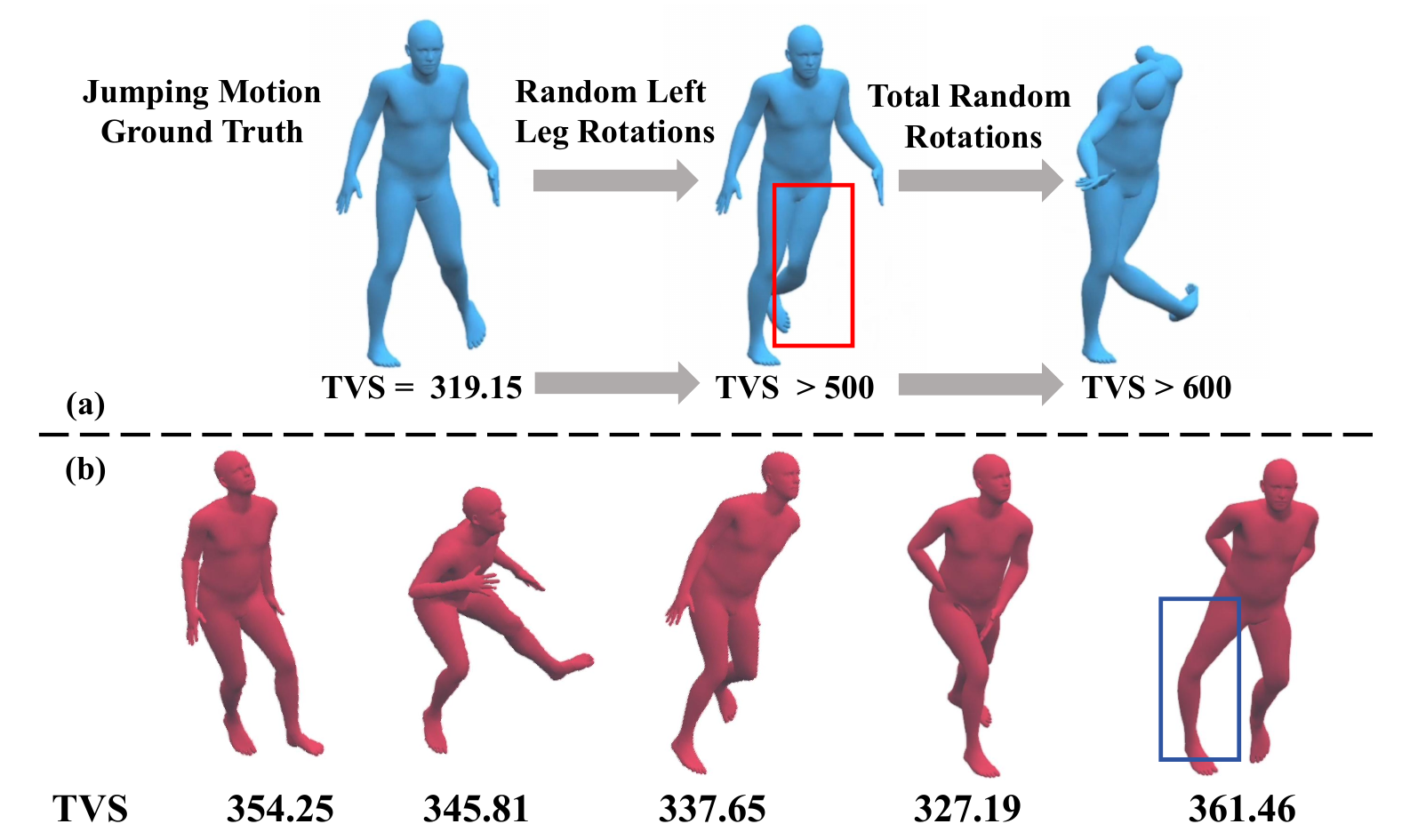}
	\caption{Illustration of the flawed motion detection potential. \\
    \textbf{(a)} As an increasing number of joints in an originally natural motion are replaced with random, physically implausible rotations (i.e., introducing ``flawed"  poses), the TVS score rises dramatically. \textbf{(b)} On real noisy mocap data, visibly flawed motions (e.g., globally unbalanced or locally unnatural poses) exhibit substantially higher TVS values. Samples are picked from out TVS-AMASS (optical-capture) and SAIP dataset (inertial-capture)~\cite{yin2025shapeawareinertialposermotion}.
}
	\label{fig:flawed}
\end{figure}

Motion capture (mocap) data—regardless of acquisition modality—frequently suffer from technical artifacts and environmental interference, yielding physically implausible or noisy motions that compromise downstream applications. Optical systems commonly encounter marker occlusion (due to self-occlusion, clothing, or environmental obstacles) and marker dropout, introducing trajectory gaps and kinematic inconsistencies. Meanwhile, IMU-based systems face distinct challenges including sensor noise, drift, and calibration errors~\cite{10.1145/3730937, DIP:SIGGRAPHAsia:2018, Shao_2025_ICCV}. These modality-specific failure modes underscore the critical need for robust preprocessing, physics-based validation, and error-aware evaluation pipelines in motion analysis workflows.

In this section, we conduct a simple yet illustrative experiment to demonstrate that our proposed TVS assigns markedly higher difficulty scores to unrealistic motions, thereby revealing its potential as an effective detector of such anomalies.
We first demonstrate the behavior of TVS on ``flawed" motions. For a natural jumping sequence (Fig.~\ref{fig:flawed} (a), left), TVS yields a score of 319. When the four joints of the left leg are perturbed with random rotations (Fig.~\ref{fig:flawed} (a), middle), the score exceeds 500. Further corrupting additional joints with physically implausible random rotations pushes TVS beyond 600 (Fig.~\ref{fig:flawed} (a), right). These results provide clear evidence that unrealistic motions are consistently assigned substantially higher difficulty. Leveraging this property, we manually select several visually anomalous sequences from a representative IMU-based mocap dataset~\cite{yin2025shapeawareinertialposermotion} (Fig.~\ref{fig:flawed} (b)). Unstable and unbalanced motions (Fig.~\ref{fig:flawed} (b), left four examples) exhibit significantly elevated TVS values, while sequences containing unnatural limb contortions (Fig.~\ref{fig:flawed} (b), right) yield scores exceeding 360. The consistently high difficulty assigned to such defective motions underscores the potential of TVS as a principled and automated tool for flawed motion detection.

\textbf{Additional potential applications of TVS are discussed in the Appendix, including TVS-guided curriculum learning and TVS's generalization to other humanoid robots.}
\section{Conclusion and Limitations}

\noindent \textbf{Conclusion.} 
We introduced the Torque Variation Score (TVS), a physics-grounded measure that quantifies the intrinsic learning difficulty of motion sequences from rigid-body dynamics. TVS operates independently of policy performance, enabling principled attribution of imitation errors: high error on low-TV motions indicates policy limitations; high error on high-TV motions reflects fundamental learning constraints. Through extensive validation on state-of-the-art policies, we demonstrated that TVS shows a strong correlation with policy errors and its utility for diagnostic analysis. As illustrative applications, MID and DSJE analyses reveal nuanced policy behaviors unattainable with aggregate metrics. TVS provides a rigorous lens to distinguish policy-induced errors from those inherent to motion learning difficulty, advancing transparent evaluation and targeted improvement in humanoid imitation learning.

\noindent \textbf{Limitations.} 
Our current implementation of TVS computation relies on the CPU-based dynamics library RBDL~\cite{10.1007/s10514-016-9574-0}, which takes roughly 1–2 minutes to calculate a 100-frame clip on an Intel i9-14900KF. GPU-accelerated dynamics solvers could significantly improve efficiency. Additionally, TVS characterizes difficulty under idealized dynamics; real-world factors (sensor noise, actuator limits) may introduce additional challenges not captured here. We will try to extend TVS to incorporate such practical constraints in future work.

\section*{Acknowledgements}
This work is supported by National Natural Science Foundation of China (62472364, 62072383), the Public Technology Service Platform Project of Xiamen City (No.3502Z20231043), Xiaomi Young Talents Program / Xiaomi Foundation and the Fundamental Research Funds for the Central Universities (20720240058), “Young Eagle
Plan” Top Talents of Fujian Province, Fujian Provincial Natural Science Foundation of China (2026J002001). Anjun Chen is the corresponding author.

\section*{Impact Statement}

This paper presents a metric aimed at advancing the field of physics-based motion imitation and reinforcement learning.



\bibliography{bib}
\bibliographystyle{icml2026}

\newpage
\appendix
\onecolumn

\section{Spectral Diversity Details}

Following the definitions from the main paper, we complete the detailed proofs and definitions for the relevant propositions regarding spectral diversity, including:  
\begin{itemize}
    \item definition of the torque space volume \(\text{vol}_Y(\mathcal{T})\)
    \item the relationship between \(\text{vol}_Y(\mathcal{T})\) and the Gram determinant \(\det(G_i)\) (Equation 2); 
    \item computation of the \(\det(G_i)\).
\end{itemize}

\subsection{Torque Space Volume Definition}

We first revisit the problem setup. Let \( S=(s_1,s_2,...,s_t) \) be a motion sequence of \( t \) frames, where the state at frame \( i \) is \( s_i = (q_i, \dot{q}_i, \ddot{q}_i) \), with \( q_i \) denoting position and rotation, \( \dot{q}_i \) and \( \ddot{q}_i \) are the generalized velocity and acceleration, respectively.
Given \( \dim q = N \), \( F: X \to Y \) maps states \(S\) in \( X = \mathbb{R}^{N\times 3\times t} \) to torques $T$ in \( Y = \mathbb{R}^{J\times t} \), where \( J \) is the number of joints.

Here, \( F: X \to Y \) is defined by the rigid-body motion equation described in Section 3 of the main paper:
\begin{equation}
    \tau = F(S) = M(q) \ddot{q} + h(q, \dot{q}) - f_{ext},
\end{equation}
and for each frame \( i \),
\begin{equation}
\tau_i = F_i(s_i) = M(q_i) \ddot{q}_i + h(q_i, \dot{q}_i) - f_{ext}.
\end{equation}
When \( S \) is perturbed, we define the neighborhood \( \mathcal{N}(S) \) as the Cartesian product of \(\epsilon\)-balls centered at each reference state \( s_i \in S \):
\begin{equation}
\mathcal{N}(S) = \prod_{i=1}^t B_\epsilon(s_i).
\end{equation}
The image set \( \mathcal{T} = F(\mathcal{N}(S)) \) is then
\begin{equation}
    \mathcal{T} = \prod_{i=1}^t F_i(B_\epsilon(s_i)).
\end{equation}
Volume computation involves the volumes of the state space \( X \) and torque space \( Y \). Since \( q \) includes rotations, \( S \) is a manifold, so the volume of a local ball scales as
\begin{equation}
\text{vol}_S(B_\epsilon(s_i)) \propto \epsilon^{3N},
\end{equation}
with the constant determined by the induced metric at \( s_i \). Consequently,
\begin{equation}
\text{vol}_X(\mathcal{N}(S)) \propto \epsilon^{3Nt}.
\end{equation}
The volume \( \text{vol}_Y(\mathcal{T}) \) is the Lebesgue measure in \( Y \).

\subsection{Volume Transformation and the Gram Determinant}

In this section, we elaborate on Equation~(2) introduced in Section~4.2.1 of the main paper and prove the relationship between the torque space volume and the Gram determinant \(G_i\).

\begin{proposition}
For a given motion sequence \(S\), the volume of the torque variation \(\mathcal{T}\) induced by \(\mathcal{N}(S)\) satisfies:
\begin{equation}
\begin{aligned}
\label{eq:vol_y}
\text{vol}_Y(\mathcal{T}) \propto \bigl[ \text{vol}_X(\mathcal{N}(S)) \bigr]^{1/3} \cdot \prod_{i=1}^t \sqrt{\det(G_i)},
\end{aligned}
\end{equation}
where \( G_i = dF_i(s_i) \, dF_i(s_i)^T \). Thus, the volume \(\text{vol}_Y(\mathcal{T})\) is determined by \( \prod_{i=1}^t \sqrt{\det(G_i)} \).
\end{proposition}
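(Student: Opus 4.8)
The plan is to reduce the statement to a per-frame computation and then invoke the coarea formula (equivalently, a first-order linearization of each $F_i$) to identify the Gram determinant as the local volume-scaling factor. The starting observation is that $F$ acts frame-wise: since $\mathcal{N}(S) = \prod_{i=1}^t B_\epsilon(s_i)$ and $F$ sends each block $s_i$ to $\tau_i = F_i(s_i)$ independently, the image factorizes as $\mathcal{T} = \prod_{i=1}^t F_i(B_\epsilon(s_i))$. Hence $\text{vol}_Y(\mathcal{T}) = \prod_{i=1}^t \text{vol}\big(F_i(B_\epsilon(s_i))\big)$, and it suffices to analyze a single frame and then multiply.

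For a single frame I would fix $s_i$, assume $dF_i(s_i)$ has full (row) rank $N$, and take $\epsilon$ small so that $F_i$ is well approximated on $B_\epsilon(s_i)$ by its affine linearization $s \mapsto F_i(s_i) + dF_i(s_i)(s - s_i)$. Under an affine map the image of a ball is exactly a solid $N$-dimensional ellipsoid, whose Lebesgue measure follows from the singular value decomposition $dF_i(s_i) = U\Sigma V^{T}$: the orthogonal factors and the translation preserve volume, while $\Sigma$ stretches the unit ball along its axes by the singular values $\sigma_1,\dots,\sigma_N$. Since $\prod_{k=1}^N \sigma_k = \sqrt{\det\!\big(dF_i(s_i)\,dF_i(s_i)^{T}\big)} = \sqrt{\det G_i}$, this yields
\begin{equation}
\text{vol}\big(F_i(B_\epsilon(s_i))\big) \;\propto\; \epsilon^{N}\,\sqrt{\det G_i}.
\end{equation}
Equivalently, $\sqrt{\det G_i}$ is precisely the coarea factor $J_{F_i} = \sqrt{\det(dF_i dF_i^{T})}$ evaluated at $s_i$: the coarea formula writes $\int_{B_\epsilon(s_i)} J_{F_i}$ as the integral over $Y$ of the $(3N-N)$-dimensional fiber measure, and dividing by the nearly constant fiber measure $\propto \epsilon^{2N}$ recovers the image volume $\propto \sqrt{\det G_i}\,\text{vol}(B_\epsilon(s_i))/\epsilon^{2N}$.

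Assembling over frames gives $\text{vol}_Y(\mathcal{T}) \propto \epsilon^{Nt}\prod_{i=1}^t \sqrt{\det G_i}$, and it remains to rewrite the $\epsilon$-power through $\text{vol}_X(\mathcal{N}(S))$. Because each state block $s_i = (q_i,\dot q_i,\ddot q_i)$ is $3N$-dimensional while the torque is $N$-dimensional, the earlier scaling $\text{vol}_X(\mathcal{N}(S)) \propto \epsilon^{3Nt}$ gives $[\text{vol}_X(\mathcal{N}(S))]^{1/3} \propto \epsilon^{Nt}$, where the exponent $1/3 = \dim Y / \dim X = Nt/(3Nt)$ is exactly the ratio of torque to state dimensions per frame. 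Substituting produces the claimed relation $\text{vol}_Y(\mathcal{T}) \propto [\text{vol}_X(\mathcal{N}(S))]^{1/3}\prod_{i=1}^t \sqrt{\det G_i}$. The main obstacle is justifying the coarea/linearization step: for small $\epsilon$ and full-rank $dF_i$, one must control the higher-order remainder so that the image volume is governed by the leading affine term, and argue that the fibers appearing in the coarea formula carry nearly equal measure so that dividing by a single fiber volume is legitimate. A secondary subtlety is that $q$ contains rotations, so $X$ is a Riemannian manifold and $\text{vol}_X(B_\epsilon(s_i))$ carries a state-dependent induced-metric factor; this factor is absorbed into the proportionality constant, consistent with the $\propto$ in the statement.
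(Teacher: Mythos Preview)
Your proposal is correct and follows essentially the same route as the paper: factor over frames via the product structure of $\mathcal{N}(S)$ and $\mathcal{T}$, apply the coarea/linearization argument per frame to extract the Gram determinant $\sqrt{\det G_i}$ as the local volume-scaling factor, and then replace the resulting $\epsilon^{Nt}$ power by $[\text{vol}_X(\mathcal{N}(S))]^{1/3}$ using $\text{vol}_X \propto \epsilon^{3Nt}$. Your treatment is in fact slightly more careful about the per-frame dimension count (you write $\text{vol}(F_i(B_\epsilon))\propto \epsilon^N\sqrt{\det G_i}$ directly) and you flag the full-rank submersion assumption, the fiber-measure caveat, and the Riemannian nature of the rotation coordinates exactly as the paper does.
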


\begin{proof}

For the per-frame mapping \( F_i: S \to \mathbb{R}^N \) with \( \dim S = 3N > N \), if \( F_i \) is a submersion at \( s_i \) (i.e., its differential \( dF_i(s_i) \) is surjective), for sufficiently small \( \epsilon \), the volume of the image set in \( \mathbb{R}^N \) is approximated by the coarea formula as
\begin{equation}
\label{eq:per-frame-volume}
\text{vol}_{\mathbb{R}^N}\!\bigl(F_i(B_\epsilon(s_i))\bigr) \approx \text{vol}_S(B_\epsilon(s_i)) \cdot \sqrt{\det\!\bigl(dF_i(s_i)\,dF_i(s_i)^T\bigr)},
\end{equation}
where \( dF_i(s_i) \in \mathbb{R}^{N \times 3N} \) is the Jacobian of \( F_i \) at \( s_i \). The Gram matrix \( dF_i(s_i)\,dF_i(s_i)^T \in \mathbb{R}^{N \times N} \) encodes local volume distortion, and its determinant’s square root governs the scaling factor. Since \( \mathcal{T} \) is a product space, the total volume is the product of per-frame contributions:

\begin{equation}
\label{eq:total-volume-product}
\begin{aligned}
    \text{vol}_Y(\mathcal{T}) \approx \prod_{i=1}^t \text{vol}_{\mathbb{R}^N}\!\bigl(F_i(B_\epsilon(s_i))\bigr)\\
    \approx \prod_{i=1}^t \Bigl[ \text{vol}_S(B_\epsilon(s_i)) \cdot \sqrt{\det(G_i)} \Bigr].
\end{aligned}
\end{equation}
Substituting the manifold volume scaling \( \text{vol}_S(B_\epsilon(s_i)) \propto \epsilon^{3N} \), we obtain
\begin{equation}
\label{eq:volume-epsilon}
\text{vol}_Y(\mathcal{T}) \propto \epsilon^{Nt} \cdot \prod_{i=1}^t \sqrt{\det(G_i)}.
\end{equation}

Given that the state-space neighborhood volume scales as \( \text{vol}_X(\mathcal{N}(S)) \propto \epsilon^{3Nt} \), normalizing yields the \(\epsilon\)-independent relation same as Eq.~\ref{eq:vol_y}:
\begin{equation}
\label{eq:volume-normalized}
\text{vol}_Y(\mathcal{T}) \propto \bigl[\text{vol}_X(\mathcal{N}(S))\bigr]^{1/3} \cdot \prod_{i=1}^t \sqrt{\det(G_i)}.
\end{equation}
Since $\bigl[\text{vol}_X(\mathcal{N}(S))\bigr]^{1/3}$ remains constant for sequences of the same length, \(\text{vol}_Y(\mathcal{T})\) is determined by \( \prod_{i=1}^t \sqrt{\det(G_i)} \).
\end{proof}

\subsection{Computation of the Gram Determinant}

\begin{proposition}
    Derived from Eq.~\ref{eq:volume-normalized}, the log-volume of the torque variation is:
    \begin{equation}
    \label{eq:log_vol_sum}
    \log \text{vol}_Y(T) \propto \sum_{i=1}^r \log \sigma_i.
    \end{equation}
where \( r = \min(t, J \cdot D) \), and \( D = \dim S = 3N \).
\end{proposition}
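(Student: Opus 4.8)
The plan is to obtain the claimed spectral sum directly from the volume identity of the preceding proposition, turning a multiplicative frame-wise expression into an additive spectral one through the logarithm and the singular value decomposition. First I would start from Eq.~\ref{eq:volume-normalized}, $\text{vol}_Y(\mathcal{T}) \propto \bigl[\text{vol}_X(\mathcal{N}(S))\bigr]^{1/3}\cdot\prod_{i=1}^t\sqrt{\det(G_i)}$, and note that for sequences of a fixed length $t$ the prefactor $\bigl[\text{vol}_X(\mathcal{N}(S))\bigr]^{1/3}$ is constant and may be absorbed into the proportionality. Taking logarithms then collapses the product into a sum, yielding $\log\text{vol}_Y(\mathcal{T}) \propto \frac{1}{2}\sum_{i=1}^{t}\log\det(G_i)$ up to an additive constant.

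The second step is purely linear-algebraic. For each frame the Gram matrix $G_i = dF_i(s_i)\,dF_i(s_i)^T$ has eigenvalues equal to the squared singular values of the Jacobian $dF_i(s_i)$, so $\det(G_i)=\prod_k\sigma_{i,k}^2$ and hence $\frac{1}{2}\log\det(G_i)=\sum_k\log\sigma_{i,k}$. Substituting this back converts the frame sum into a sum of log-singular-values over all frames and joint directions. To rewrite it as the single-index spectral diversity $\sum_{i=1}^r\log\sigma_i$ of Eq.~\ref{eq:spectral_diversity}, I would assemble the per-frame Jacobians into one matrix by flattening each $dF_i(s_i)$ into a row of length $J\cdot D$ and stacking the $t$ rows; the singular values $\{\sigma_i\}$ of this $t\times(J\cdot D)$ matrix then form a single spectrum whose length is exactly its rank bound $r=\min(t,\,J\cdot D)$, which is where the stated index range comes from.

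The main obstacle I anticipate lies precisely in this aggregation. The exact log-volume is the sum of logs of the \emph{per-frame} singular values, a spectrum of size on the order of $t\cdot\min(J,D)$, whereas the stated spectral diversity uses the singular values of the \emph{time-stacked} Jacobian, a spectrum of size $\min(t,J\cdot D)$; these two spectra are not identical, so the proportionality in Eq.~\ref{eq:log_vol_sum} cannot follow from an exact determinant identity. I would therefore justify it as a controlled approximation: both quantities capture the aggregate magnitude and effective dimensionality of the induced torque variation, and the dominant singular values of the stacked Jacobian track the frame-wise volume growth, making $\sum_{i=1}^r\log\sigma_i$ a faithful computable proxy. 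Turning this into a rigorous bound would require relating $\sum_{i=1}^t\log\det(G_i)$ to $\sum_{i=1}^r\log\sigma_i$ through a majorization or trace-type inequality on the two spectra; by contrast the remaining steps — the logarithm, the determinant-to-singular-value identity, and discarding the length-dependent constant — are routine.
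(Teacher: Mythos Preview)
Your derivation matches the paper's proof step for step: take the logarithm of Eq.~\ref{eq:volume-normalized}, absorb the length-dependent prefactor, use the SVD identity $\det(G_i)=\prod_j\sigma_{i,j}^2$ to turn each $\tfrac{1}{2}\log\det(G_i)$ into $\sum_j\log\sigma_{i,j}$, and sum over frames. The paper's proof does nothing beyond this; after ``summing over all frames'' it simply asserts Eq.~\ref{eq:log_vol_sum} with $r=\min(t,\,J\cdot D)$ without ever constructing the stacked matrix or relating its spectrum to the per-frame spectra.

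The obstacle you flag --- that the per-frame spectrum (of size order $t\cdot N$) and the single spectrum of a time-stacked $t\times(J\cdot D)$ Jacobian are not the same object, so the final proportionality is an approximation rather than an identity --- is real, and the paper does not resolve it. Your proposal is therefore already more explicit than the paper's own argument: the paper treats the passage to $\sum_{i=1}^r\log\sigma_i$ as a definitional/computational convenience, exactly the ``faithful computable proxy'' you describe, rather than proving a bound. You need not look for a missing lemma; there isn't one in the paper.
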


\begin{proof}
For each frame, the Jacobian matrix \( dF_i \in \mathbb{R}^{N \times D} \) yields a Gram matrix that is an \( N \times N \) symmetric positive definite matrix (since the inertial matrix \( M(q) \) ensures that \( dF_i \) has full row rank). Let the Singular Value Decomposition (SVD) of \( dF_i \) be \( dF_i = U_i \Sigma_i V_i^T \), where \( \Sigma_i = \text{diag}(\sigma_{i,1}, \sigma_{i,2}, \ldots, \sigma_{i,N}) \), and \( \sigma_{i,j} \geq 0 \) are the singular values. Accordingly, we have:
\begin{equation}
\label{eq:gram_svd}
G_i = U_i \Sigma_i \Sigma_i^T U_i^T = U_i \cdot \text{diag}(\sigma_{i,1}^2, \sigma_{i,2}^2, \ldots, \sigma_{i,N}^2) \cdot U_i^T,
\end{equation}
Therefore, we have:
\begin{equation}
\label{eq:det_singular}
\det(G_i) = \prod_{j=1}^N \sigma_{i,j}^2, \quad \sqrt{\det(G_i)} = \prod_{j=1}^N \sigma_{i,j}.
\end{equation}
Taking the logarithm:
\begin{equation}
\label{eq:log_det}
\log \sqrt{\det(G_i)} = \sum_{j=1}^N \log \sigma_{i,j}.
\end{equation}
Summing over all frames, the log-volume of the total torque variation is:
\begin{equation}
\label{eq:log_vol_sum}
\log \text{vol}_Y(T) \propto \sum_{i=1}^r \log \sigma_i.
\end{equation}
where \( r = \min(t, J \cdot D) \).

\end{proof}

\begin{figure*}[t!]
	\centering
 \includegraphics[width=1.0\linewidth]{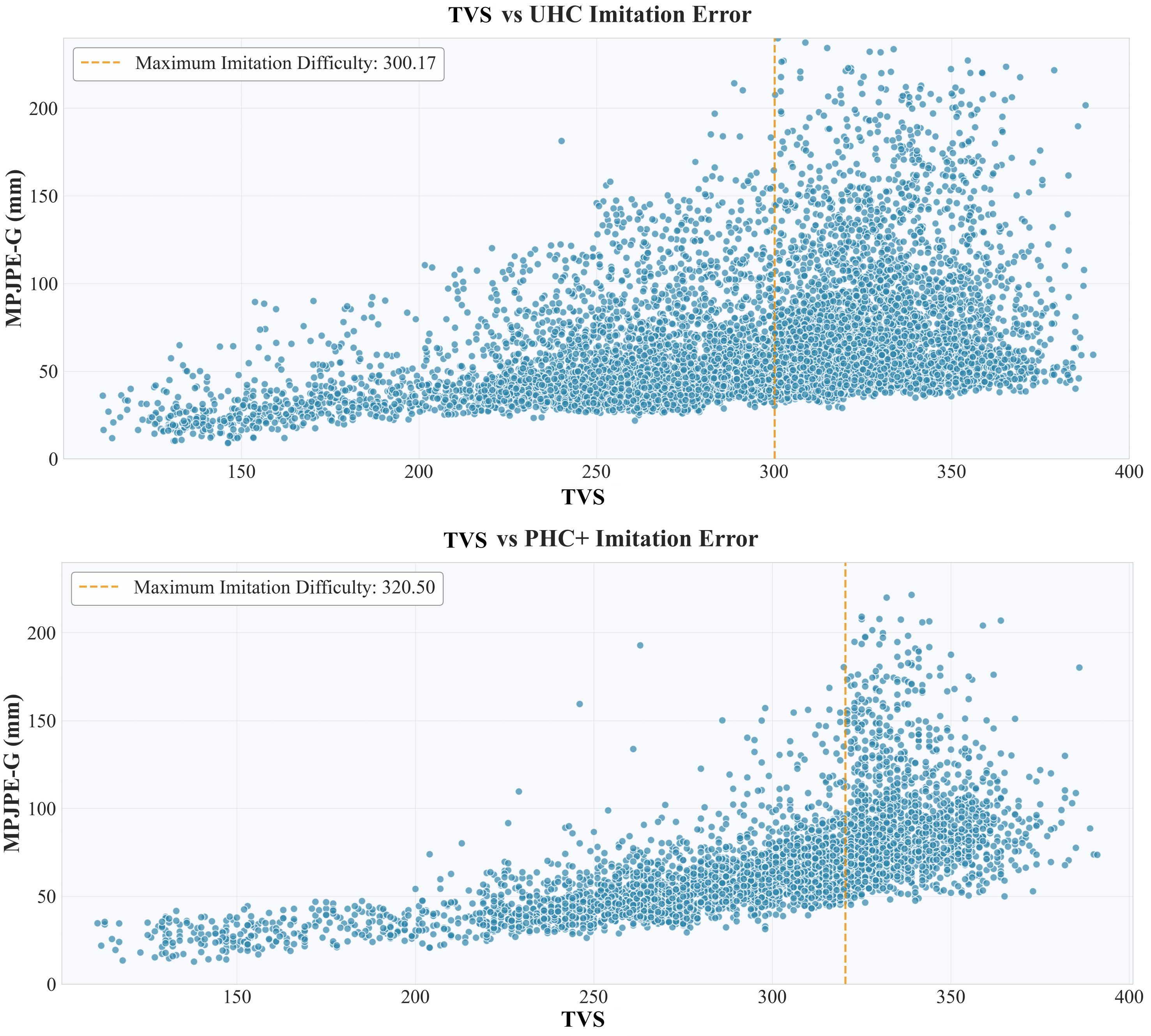}
	\caption{More comprehensive scatter plots of TVS vs imitation error (top: UHC; bottom: PHC+), with over 10,000 motion samples drawn from the MD-AMASS dataset, providing stronger statistical evidence of the robust correlation between our proposed TVS and humanoid control error across different policy architectures.
}
	\label{fig:scatter_full}
\end{figure*}


\section{Extensive Validation}

\subsection{Imitation Error Correlation}

To complement the validation in the main paper, we conduct an extensive evaluation on over 10,000 samples from the MD-AMASS dataset, enabling a more thorough examination of the effectiveness of TVS.
Consistent with the validation results in the main paper, motions with higher TVS exhibit weaker policy performance, as shown in Fig.~\ref{fig:scatter_full}.

\subsection{Hyperparameter Analysis}
\label{sec:hyperparameters}

\begin{table}[t!]
\centering
\caption{Sensitivity analysis of the integration weights $\omega_1, \omega_2, \omega_3$. The correlation between TVS and imitation error remains highly stable even under extreme weight skewness.}
\label{tab:hyper-weights}
\begin{tabular}{lccccc}
\toprule
$\boldsymbol{\omega_1}$ & $\boldsymbol{\omega_2}$ & $\boldsymbol{\omega_3}$ & \textbf{Pearson} & \textbf{Spearman} & \textbf{Kendall} \\
\midrule
\textbf{0.4} & \textbf{0.3} & \textbf{0.3} & 0.4823 & \textbf{0.6586} & \textbf{0.4747} \\
0.3 & 0.4 & 0.3 & 0.4832 & 0.6502 & 0.4740 \\
0.3 & 0.4 & 0.4 & \textbf{0.4838} & 0.6584 & 0.4745 \\
0.5 & 0.25 & 0.25 & 0.4760 & 0.6550 & 0.4716 \\
0.25 & 0.5 & 0.25 & 0.4209 & 0.6233 & 0.4369 \\
0.25 & 0.25 & 0.5 & 0.4781 & 0.6562 & 0.4725 \\
0.7 & 0.15 & 0.15 & 0.4657 & 0.6444 & 0.4628 \\
0.15 & 0.7 & 0.15 & 0.4473 & 0.6172 & 0.4409 \\
0.15 & 0.15 & 0.7 & 0.4616 & 0.6478 & 0.4653 \\
\bottomrule
\end{tabular}
\end{table}

\begin{table}[t!]
\centering
\caption{Sensitivity analysis of the segment count $K$ in Segment Diversity. The overall performance is robust across a wide range of temporal partitioning choices.}
\label{tab:hyper-k}
\begin{tabular}{lccc}
\toprule
\textbf{$K$} & \textbf{Pearson} & \textbf{Spearman} & \textbf{Kendall} \\
\midrule
2 & 0.4618 & 0.6524 & 0.4552 \\
3 & 0.4691 & 0.6534 & 0.4572 \\
\textbf{4} & 0.4823 & \textbf{0.6586} & \textbf{0.4747} \\
5 & 0.4763 & 0.6534 & 0.4542 \\
6 & 0.4793 & \textbf{0.6586} & 0.4501 \\
10 & 0.4813 & 0.6554 & 0.4491 \\
15 & \textbf{0.4856} & 0.6516 & 0.4470 \\
\bottomrule
\end{tabular}
\end{table}

\begin{figure}[t!]
	\centering
 \includegraphics[width=1\linewidth]{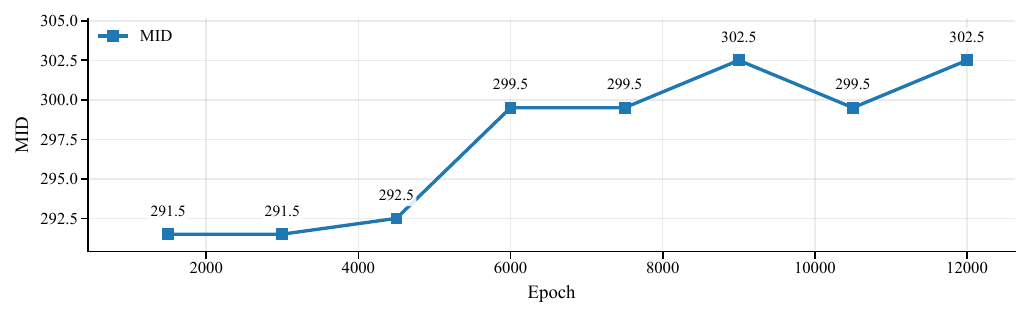}
	\caption{We plot displays the MID values calculated for the PHC policy across training steps.
}
	\label{fig:flawed}
\end{figure}

We conduct a sensitivity analysis on the key hyperparameters of TVS: the integration weights $\omega_1, \omega_2, \omega_3$ and the segment count $K$. We evaluate the robustness of our metric by measuring its correlation (Pearson, Spearman, and Kendall) with the UHC imitation error (MPJPE-G) across various hyperparameter configurations.

\begin{itemize}
    \item \textbf{Integration Weights ($\omega_1, \omega_2, \omega_3$).} To evaluate the sensitivity of the final score to weight allocation, we assign a dominant weight to one specific component while keeping the other two smaller and equal (e.g., 0.5, 0.25, 0.25). This tests whether over-relying on any single factor degrades the metric's overall correlation. As shown in Table~\ref{tab:hyper-weights}, even under highly skewed weight distributions (e.g., assigning up to 0.7 to a single component), the correlation coefficients remain remarkably stable, with the Spearman correlation consistently exceeding 0.61. This empirical evidence substantiates that TVS is fundamentally robust to weight selection. In our implementation, we empirically set $\omega_1 = 0.4$, $\omega_2 = 0.3$, and $\omega_3 = 0.3$.
    
    \item \textbf{Segment Count ($K$).} We similarly evaluate the sensitivity of the segment count $K$ used in the Segment Diversity computation. Table~\ref{tab:hyper-k} demonstrates that varying $K$ across a wide range (from 2 to 15) yields minimal fluctuations in correlation performance. This indicates that our method is largely insensitive to the temporal partition granularity. Based on these results, we set $K=4$.
\end{itemize}

\subsection{MID Validation}

\begin{table}[t!]
\centering
\caption{Quantitative comparison results between TVS-guided curriculum learning model (Ours) and baseline model (w/o CL). }
\begin{tabular}{lcccc}
\toprule
\textbf{Method} & \textbf{MPJPE-G} & \textbf{MPJPE-L} & \textbf{Acc-Dist} & \textbf{Vel-Dist} \\
 & (mm) $\downarrow$ & (mm) $\downarrow$ & (mm/frame²) $\downarrow$ & (mm/frame) $\downarrow$ \\
\midrule
\multicolumn{5}{c}{\textbf{Overall Performance}} \\
\hline
w/o CL       & 48.08 & 32.22  & 3.36 & 5.32 \\
Ours & \textbf{45.22} & \textbf{30.98} & \textbf{3.19} & \textbf{5.21} \\
\midrule
\multicolumn{5}{c}{\textbf{TVS\(<\)200}} \\
\hline
w/o CL       & 36.30 & 25.67 & 1.13 & 1.94 \\
Ours & \textbf{31.08} & \textbf{23.61} & \textbf{1.09} & \textbf{1.84} \\
\midrule
\multicolumn{5}{c}{\textbf{200\(<\)TVS\(<\)300}} \\
\hline
w/o CL       & 49.94 & \textbf{35.04} & 3.04 & 4.96 \\
Ours & \textbf{49.52} & 35.43 & \textbf{2.93} & \textbf{4.83} \\
\midrule
\multicolumn{5}{c}{\textbf{TVS\(>\)300}} \\
\hline
w/o CL       & 60.79 & 37.44 & 6.32 & 9.65 \\
Ours & \textbf{59.18} & \textbf{35.90} & \textbf{5.87} & \textbf{9.52} \\
\bottomrule
\end{tabular}
\label{tab:cl}
\end{table}

To further validate the rationality of MID as a diagnostic tool, we tracked the evolution of the MID score during policy training, hypothesizing that a valid robustness metric should quantitatively reflect the policy's growing capability. We train a PHC~\cite{luo2023perpetual} policy and evaluate checkpoints at regular intervals of 1,500 steps (ranging from 1,500 to 12,000) on the AMASS dataset. As illustrated in Fig.~\ref{fig:flawed}, the MID score exhibits a clear upward trend, rising monotonically as the training progresses and the policy matures. This consistent increase confirms that MID effectively captures the expansion of the policy's robustness boundary, successfully distinguishing between under-trained and converged models by identifying the precise difficulty threshold where the policy's performance begins to saturate.

\section{Applications}

TVS has proven to be a powerful tool for in-depth evaluation of imitation learning policies, owing to its difficulty-aware nature. Nevertheless, beyond assessing model performance across motions of varying difficulty and deriving insightful metrics (e.g., MID and DSJE), TVS holds significant potential for broader applications, including: 
\begin{enumerate}
    \item {\it Difficulty-guided curriculum learning}: Simpler motions (those with lower TVS scores) are easier to learn and can be used in early training stages, with progressively higher-difficulty actions introduced later to improve learning efficiency and robustness. 
    \item {\it Generalization to other humanoid robots.} 
    \item {\it Detection of flawed motions in mocap datasets}: Many motion capture (mocap) systems produce noisy or physically implausible data, especially when sourced from video or IMU-based systems~\cite{mu2025stablemotiontrainingmotioncleanup,10.1145/3730937,albanis2023noise}. These flawed motions can be detected using TVS, as unusually high TVS values may indicate corrupted or unrealistic motions (e.g., unnatural body contortions that violate physical constraints). 
\end{enumerate}
Preliminary results of these applications are shown below. 

\subsection{Difficulty-Guided Curriculum Learning}

Curriculum learning~\cite{10.1145/1553374.1553380} is a training paradigm that begins with simpler examples and progressively introduces more challenging ones, mimicking the natural progression of human learning.
This structured, difficulty-ordered presentation of data has been shown to significantly accelerate convergence and enhance generalization in a wide range of tasks~\cite{cl_survey}. However, existing imitation-learning-based humanoid control approaches almost exclusively rely on random sampling of training motions, with virtually no attempts at curriculum learning. We argue that this gap stems from the absence of an explicit, quantitative definition of motion difficulty in prior work. 

In this section, we present a simple yet revealing curriculum learning experiment. This experiment demonstrates that explicitly training the agent with motions ordered from easy to hard according to TVS leads to superior performance. Specifically, we sample 1000 motion sequences for each of three difficulty ranges from the MD-AMASS dataset: \(\text{TVS} \leq 200\) (easy), \(200 \leq \text{TVS} < 300\) (medium), and \(\text{TVS} \geq 300\) (hard). We compare two training protocols using the PHC framework with a single primitive policy:

\begin{itemize}
    \item \textbf{Baseline (Tab.~\ref{tab:cl} w/o CL)}: All 3,000 sequences are mixed and trained jointly for 9,000 episodes.
    \item \textbf{Curriculum Learning (Tab.~\ref{tab:cl} Ours)}: Training proceeds in three stages of 3,000 episodes each: (1) only easy motions \(\text{TVS} < 200\), (2) easy + medium motions \(\text{TVS} < 300\), and (3) all motions (full dataset).
\end{itemize}

We evaluate both models using standard metrics established in PHC: root-relative mean per-joint position error (MPJPE-L), global MPJPE (MPJPE-G), acceleration distance (Acc-Dist), and velocity distance (Vel-Dist). Lower values indicate better performance. As shown in Table~\ref{tab:cl}, the curriculum guided by TVS consistently outperforms the baseline across \textit{all} metrics on the overall test set. Moreover, performance improvements are observed across all difficulty subgroups. We attribute this gain to the difficulty-aware progressive learning schedule enabled by TVS. These results highlight the practical value of our TVS benchmark in guiding the training of humanoid control policies.

\subsection{Generalization to Other Humanoid Robots}

\begin{figure}[t!]
	\centering
 \includegraphics[width=1\linewidth]{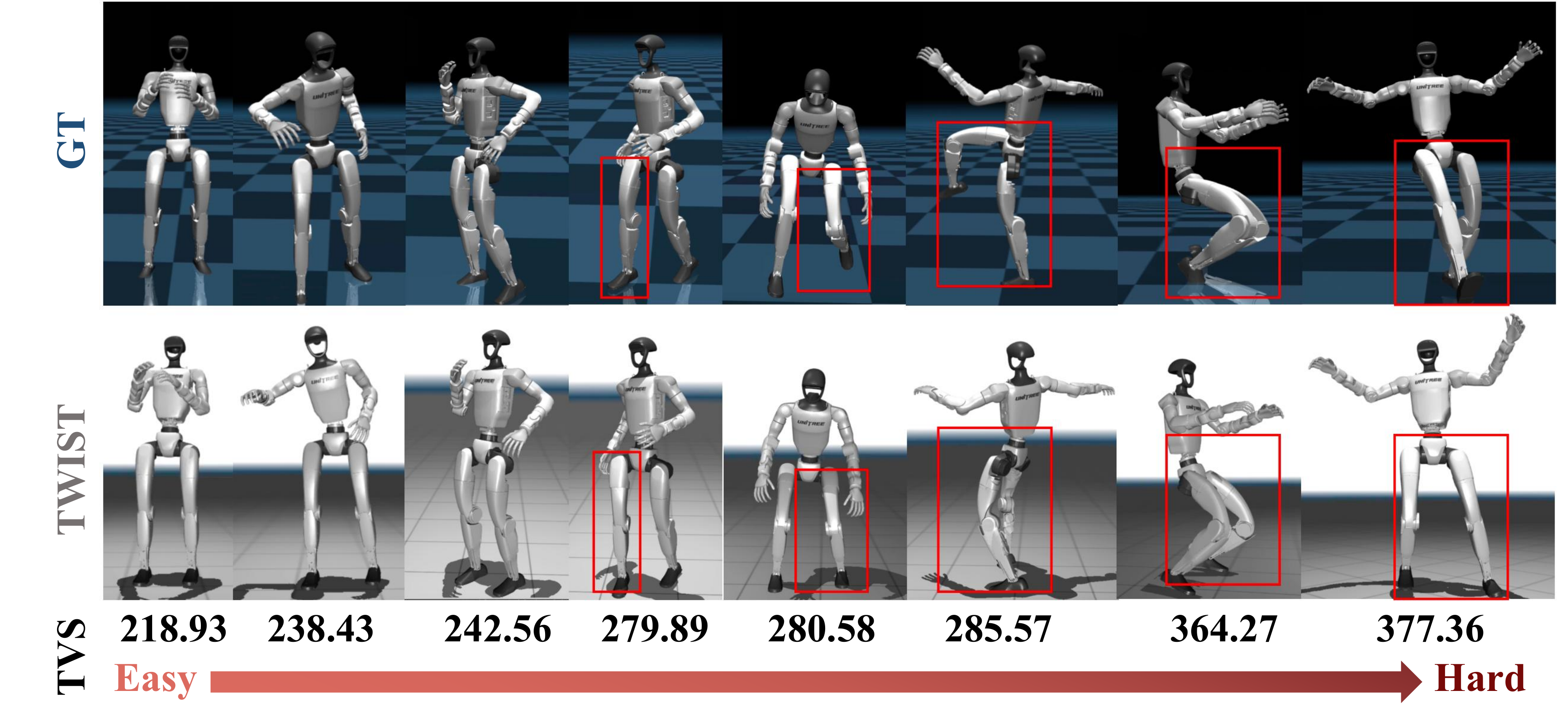}
	\caption{Qualitative results of the Unitree G1-based TWIST~\cite{ze2025twist} controller on motions of increasing difficulty (easy $\to$ hard, left $\to$ right) as assessed by TVS. Red boxes highlight joints of noticeable imitation error.
}
	\label{fig:g1}
\end{figure}

In the main experiments, we follow the common practice of UHC~\cite{luo2021dynamics} and PHC~\cite{luo2023perpetual} by adopting an SMPL-based humanoid model as the controlled robot. However, given the structural similarity among humanoid robots, motions identified as challenging for the SMPL-based morphology are likely to pose comparable difficulties for other humanoid robots, suggesting strong generalization potential of our difficulty metric across different humanoid robots.
In this section, we provide a preliminary demonstration of this generalization capability. We retarget motions of increasing difficulty (originally evaluated on the SMPL model) to the Unitree G1~\cite{unitree_g1} robot and qualitatively assess the imitation performance using the state-of-the-art G1-based controller TWIST~\cite{ze2025twist}. 
As shown in Fig.~\ref{fig:g1}, on relatively easy motions (left three columns), the policy reproduces the reference motion with minimal error; As TVS increases, noticeable failures emerge, such as incomplete leg rotation during turning (fourth column left), insufficient kick height (third column right), and, in the most challenging dance sequence (first column right), near-complete paralysis of both legs. The results reveal a consistent trend: imitation quality systematically degrades as TVS increases, confirming that the motion difficulty measured by our method on SMPL-structured data reliably generalizes to the G1 robot.


Furthermore, since our method is grounded in rigid-body dynamics, the proposed TVS is, in principle, applicable to other humanoid robotics and even systems beyond humanoid morphologies. We believe that further exploration of our benchmark on a broader range of robotic systems opens substantial and exciting avenues for future research.


\end{document}